\newcommand{\bc}[1]{\left\{ #1 \right\}}
\newcommand{\f}[2]{\frac{#1}{#2}}
\newcommand{\ceil}[1]{\ensuremath{ \left\lceil #1 \right\rceil }}
\newcommand{\floor}[1]{\ensuremath{ \lfloor #1 \rfloor }}
\newcommand{\Z}{\ensuremath{\mathbb{Z}}}
\newcommand{\remove}[1]{}
\newtheoremstyle{coolstyle}
    {9pt}% Space above
    {9pt}% Space below
    {\slshape}% Body font
    {}% Indent amount
    {\bfseries}% Theorem head font
    {.}% Punctuation after theorem head
    {.5em} % Space after theorem head
    {}% Theorem head spec (can be left empty, meaning `normal')
\theoremstyle{coolstyle}
\newtheorem{theorem}{Theorem}[section]
\newtheorem{lemma}[theorem]{Lemma}
\newtheorem{corollary}[theorem]{Corollary}
\newtheorem{definition}[theorem]{Definition}
\newtheorem{specification}[theorem]{Specification}
\begin{document}
\remove{
% [arxiv_v2: inline-PS \special stripped, 166 chars]
}

\title{A Complexity Separation Between the Cache-Coherent and Distributed Shared Memory Models}

\author{
Wojciech Golab\thanks{This research was conducted mostly during a postdoctoral fellowship
at the University of Calgary, under the supervision of Prof.\ Philipp Woelfel.
Author partially supported by the Natural Sciences and Engineering Research Council (NSERC) of Canada.} \\
Hewlett-Packard Labs\\
Palo Alto, California, USA\\
\url{wojciech.golab@hp.com}
}

\maketitle
%%%%% Processing file abstract.tex
We consider asynchronous multiprocessor systems where processes communicate
by accessing shared memory. Exchange of information among processes
in such a multiprocessor necessitates costly memory accesses called
\emph{remote memory references} (RMRs), which generate communication on the
interconnect joining processors and main memory.
In this paper we compare two popular shared memory architecture models,
namely the \emph{cache-coherent} (CC) and \emph{distributed shared memory} (DSM) models,
in terms of their power for solving synchronization problems efficiently
with respect to RMRs.  The particular problem we consider entails one
process sending a ``signal'' to a subset of other processes.
We show that a variant of this problem can be solved
very efficiently with respect to RMRs in the CC model, but not so
in the DSM model, even when we consider amortized RMR complexity.

To our knowledge, this is the first separation in terms of amortized
RMR complexity between the CC and DSM models.
It is also the first separation in terms of RMR complexity (for asynchronous systems)
that does not rely in any way on wait-freedom---the requirement
that a process makes progress in a bounded number of its own steps.

%%%%% Done processing abstract.tex

\vspace{1mm}
\noindent
{\bf Categories and Subject Descriptors:}
    B.3.2 {[Memory Structures]}: {Design Styles}, \textit{Shared memory};
    F.2.2 {[Analysis of Algorithms and Problem Complexity]}: {Nonnumerical Algorithms and Problems}.

\vspace{1mm}
\noindent
{\bf General Terms:} Algorithms, theory.

\vspace{1mm}
\noindent
{\bf Keywords:} Shared memory models, remote memory references, complexity.

%%%%% Processing file intro.tex
\newcommand{\probname}{signaling problem}
\newcommand{\ProbName}{Signaling Problem}
\newcommand{\opop}{operation}
\newcommand{\A}{\mathcal{A}}

\section{Introduction} \label{sec:intro}
Shared memory multiprocessors in the form of multi-core chips
  can be found in most servers and desktop computers today,
  as well as many embedded systems.
Due to the large gap between memory and processor speed, such systems rely heavily on
  architectural features that mitigate the relatively high cost of accessing memory.
Two models of such architectures, illustrated in Figure~\ref{fig_arch}, are
  the cache-coherent (CC) model and the distributed shared memory (DSM) model \cite{jand:surv}.
Cache-coherent systems are most common in practice, and often
  use a shared bus as the interconnect between processors and memory.
Memory references that can be resolved entirely using a processor's cache
  (e.g., in-cache reads)
  are called \emph{local} and are much faster than ones that
  traverse the interconnect (e.g., cache misses), called \emph{remote memory references} (RMRs).
The fact that any memory location can be cached by any process simplifies greatly
  the design of efficient algorithms in the CC model.
In contrast, in the DSM model memory is partitioned into modules that
  are tied to specific processors.
Different memory modules can be accessed in parallel by those processors
  using separate memory controllers, which provides superior memory
  bandwidth.
As in the CC model, we can classify memory references in the DSM model as fast local references
  versus more costly RMRs.
The classification in the DSM model is based only on the memory location
  (as opposed to the state of caches):
A reference to a memory location in a processor's own memory module is local,
  and a reference to another processor's memory module is an RMR.

    \begin{figure*}[htbp]
    \center
    \epsfig{file=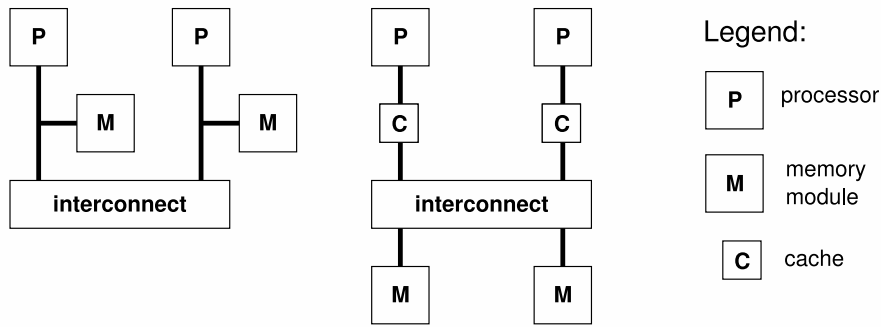, scale=0.6}
    \vspace{-12pt}
    \caption{Models of shared memory architecture---DSM (left) and CC (right).\label{fig_arch}}
    \end{figure*}

In this paper we consider efficient algorithms for solving synchronization problems
  in asynchronous multiprocessors that conform to either the CC or DSM model.
In particular, we consider algorithms that use \emph{blocking synchronization},
  whereby processes may busy-wait by repeatedly reading the values of shared variables.
In this context, RMR complexity has been shown to be a meaningful indicator of real world
  performance (e.g., \cite{tand:spin}).
The fundamental technique in the design of RMR-efficient algorithms
  is to co-locate variables with processes that access them most heavily.
Unfortunately such techniques are specific to a shared memory model.
Consequently, an algorithm that is very RMR-efficient in one model
  is not necessarily efficient with respect to RMRs in another model
  (e.g., see Section~5 of \cite{jand:surv}).

An interesting open problem is to compare the relative power of the CC and DSM
  models for solving synchronization problems efficiently with respect to RMRs.
To settle this question, we must fix a synchronization problem and a set of synchronization
  primitives (e.g., atomic reads and writes) that are available for accessing
  memory.
Consider first the mutual exclusion (ME) problem \cite{dijk:soln}, where processes
  contend for a shared resource and must coordinate with each other to ensure
  that at most one process has access to the resource at any given time.
Tight bounds for RMR complexity of $N$-process mutual exclusion are known for popular combinations
  of primitives, and do not show evidence that the CC model is more powerful
  than the DSM model, or vice-versa \cite{jand:surv,fan:lb,atti:rmr}.
That is, although the RMR complexity may depend on the combination of primitives, for each
  combination studied, the tight bound is the same for the CC model as for the DSM model.

Surprisingly, Hadzilacos and Danek \cite{danek:gme} discovered a separation
  between the CC model and DSM model by looking at the RMR complexity of solving
  \emph{group mutual exclusion} (GME).
This problem generalizes ordinary mutual exclusion by annotating each
  request for the shared resource with a \emph{session} ID,
  and allowing multiple processes to access the resource concurrently
  provided that they request the same session.
For a certain combination of primitives, it turns out that
  the RMR complexity of two-session $N$-process GME is
  less in the CC model than in the DSM model, by a factor of $\Theta(N/\log N)$.

Although we know that in one case the CC model is more powerful than the DSM
  model with respect to RMR complexity of a synchronization problem,
  the relative power of these two models is not well understood in the broader sense.
For example, we do not know whether the CC model is at least as powerful
  as the DSM model for all problems, or whether perhaps the two models are incomparable
  because for some problem DSM is more powerful than CC.
We also do not know how the two models compare under other notions
  of power, particularly the power to solve problems efficiently with
  respect to amortized (as opposed to worst-case) RMR complexity.

Answers to the above questions have interesting implications regarding
  the possibility of an \emph{RMR-preserving simulation} of
  one model using another other.
Such a simulation, if it exists, could be used to transform
  an algorithm that solves a given problem in one model to an algorithm
  that solves the same problem in another model, with at most
  a constant-factor increase in RMR complexity.
A simulation of the CC model using the DSM model would
  be particularly interesting for both software and hardware designers
  because the CC model is arguably easier to program in,
  whereas the DSM model is easier to implement in hardware.
Known results show only that such a simulation cannot exist
  if we define RMR complexity in the worst-case sense,
  leaving open the possibility that a simulation could
  at least preserve amortized RMR complexity.

Another open question is whether it is possible to show a separation
  in the RMR complexity of a problem between the CC and DSM models
  without leaning on \emph{wait-freedom}---the requirement that a process
  must make progress in a bounded number of its own steps \cite{herl:wait}.
The complexity separation shown in \cite{danek:gme} depends crucially
  on a restricted form of wait-freedom in the specification of the GME
  problem, which makes it more difficult to synchronize
  when one process releases the shared resource and allows
  a subset of other processes to emerge from busy-wait loops and make progress.
This aspect of the problem specification tends to favor the CC model, where the
  problem can be solved by having the former process signal
  the others through a single spin variable.
In contrast, in the DSM model spin variables cannot be shared
  by processes (or else RMR complexity becomes unbounded), and so
  more elaborate synchronization mechanisms must be used.
Wait-freedom restricts the possible mechanisms that can be used, and
  in that sense penalizes the DSM model.
Consequently, we wonder whether removing wait-freedom from the problem
  specification might create a more level playing field within
  which to judge the power of the CC and DSM models.

\paragraph{Summary of contributions}
The key contribution of this paper is the proof of a separation between
  the DSM and CC models in terms of the amortized RMR complexity
  of solving a simple synchronization problem.
The ``direction'' of the separation is consistent with the one
  discovered by Hadzilacos and Danek \cite{danek:gme}; the
  problem under consideration is solved more efficiently in the CC model
  than in the DSM model.
However, our result is stronger in two ways.
First, it applies to amortized RMR complexity and not only worst-case RMR complexity.
Second, it is insensitive to progress properties in the sense that it holds for both
  the wait-free version of the synchronization problem and the
  version that allows busy-waiting.

Our result implies that the CC model cannot be simulated using the DSM
  model without introducing more than a constant-factor overhead
  in terms of the total number of RMRs performed by all processes
  executing an algorithm.

\paragraph{Road map}
We give the model and definitions in Section~\ref{sec:model}.
We then survey related work in Section~\ref{sec:relwork}.
In Section~\ref{sec:pspec}, we specify a simple synchronization problem,
   called the \emph{\probname}.
In Section~\ref{sec:ubcc} we give a simple algorithm
   that solves this problem in the CC model using very few RMRs.
Section~\ref{sec:lb} presents a lower bound for the DSM model,
   which establishes a complexity separation from the CC model.
We then discuss the complexity of variations on the \probname\
   in Section~\ref{sec:ub}.
Finally, we consider the practical implications of our main result
   in Section~\ref{sec:disc}, and conclude the paper in Section~\ref{sec:conc}.

%%%%% Done processing intro.tex
%%%%% Processing file model.tex
\newcommand{\Procs}{\mathcal{P}}

\section{Model} \label{sec:model}
There are $N$ asynchronous processors that communicate by accessing shared memory
   using the following atomic primitives:
   reads, writes, Compare-And-Swap (CAS) and Load-Linked/Store-Conditional (LL/SC).
(For definitions of CAS and LL/SC see \cite{jaya:llsc}.)

\paragraph{Processes and steps}
There are up to $N$ \emph{processes} running on the processors,
    at most one process per processor.
The set of processes is denoted $\Procs = \bc{p_1, p_2, ..., p_N}$,
    and we say that $p_i$ has \emph{ID} $i$.
Each process is a sequential thread of control that repeatedly applies \emph{steps},
    where each step entails a memory access and some local computation.
A step may cause a process to \emph{terminate}, meaning that it stops performing steps.
Processes can be modeled formally as input/output automata \cite{lt:ioaut},
    but here we adopt a more informal approach by describing their possible behaviors
    through a shared memory algorithm.
The algorithm is expressed through pseudo-code for each process, which is a collection
    of procedures that a process may call, one at a time.
A procedure may accept some input arguments and may return some response to the caller.
We specify a process by defining the possible sequences of procedure calls
   a process may make before terminating.
We say that a process \emph{crashes} if it terminates while performing
   a procedure call.

\paragraph{Histories}
A \emph{history} is a finite or infinite sequence of steps that describes an execution of the
    multiprocessor from well-defined initial conditions.
Process steps can be scheduled arbitrarily, and there is no bound on the
    number of steps that can be interleaved between two steps of the same process.
A process \emph{participates} in a history if it takes at least one step in that history.
A history is \emph{fair} if every process that participates either
  takes infinitely many steps, or terminates eventually.

\paragraph{Progress properties}
We will analyze the algorithms presented in this paper with respect to
   two progress properties: \emph{wait-free} and \emph{terminating}.
An algorithm is wait-free if there is an upper bound $B$ such that
   for any history $H$ of the algorithm, each (partially or fully completed)
   call to a procedure in $H$ incurs at most $B$ steps.
An algorithm is terminating if, for any fair history $H$ of the algorithm
   where no process crashes, each (partially or fully completed)
   call to a procedure in $H$ incurs a finite number of steps.
(That is, in $H$ each process that participates either terminates after completing a
   finite number of procedure calls, or else it makes infinitely many
   procedure calls.)

\paragraph{Remote Memory References}
RMRs were introduced in Section~\ref{sec:intro}.
In the DSM model, a memory access is an RMR if and only if the address accessed
  by the processor maps to a memory module tied to another processor.
In the CC model, the definition of an RMR is more complex; it depends on the
  state of each processor's cache, as well as the type of coherence protocol
  used to maintain consistency among caches.
For our purposes, we need only a loose definition of RMRs in the CC model
  that makes it possible to establish upper bounds on RMR complexity.
To that end, we assume that if a process reads some memory location
  several times, then this entire sequence of reads incurs only one RMR
  in total provided that between the first and last of these reads there is
  no nontrivial operation performed by another process on that memory location.
(A nontrivial operation overwrites a memory location, possibly with the
  same value as before.)

%%%%% Done processing model.tex
%%%%% Processing file related.tex
\section{Related Work} \label{sec:relwork}

A number of interesting complexity results have appeared in literature
  on algorithms for asynchronous shared memory multiprocessors.
Many of these pertain to mutual exclusion (ME) \cite{dijk:soln, lamp:par2},
  the problem of ensuring exclusive access to a shared resource
  among competing processes.
RMRs were originally motivated in this context as an alternative
  to traditional step complexity.
(In an asynchronous model, ME cannot be solved with bounded step complexity per process.)
The key result in RMR complexity of ME is a separation between the complexity
  (per passage through the critical section) of two classes of algorithms
  characterized by the set of primitives used.
For the class based on reads and writes,
  the tight bound is $\Theta(\log N)$ RMRs per process in the worst case
  \cite{yang:fast, kim:lock, fan:lb, atti:rmr}.
In contrast, for the class that uses reads, writes, and Fetch-And-Increment or Fetch-And-Store,
  the tight bound is $O(1)$ RMRs \cite{tand:spin, graunke:synch}.
Analogous bounds hold for first-come-first-served (FCFS) ME \cite{lamp:baker, jand:surv, dg:fcfs}.

RMR complexity bounds for the class of algorithms that use reads and writes only can be generalized
  to the class that in addition uses comparison primitives (e.g., Compare-And-Swap) \cite{jand:surv}.
For ME, the reason is that comparison primitives can be simulated efficiently using reads and writes.
For example, any comparison primitive can be implemented
  using reads and writes with only $O(1)$ RMRs per operation in the CC and DSM models
  \cite{ghw:o1l, ghhw:cas}.
Note that in such implementations \emph{every} operation incurs RMRs, in contrast
  to a comparison primitive implemented in hardware, which can sometimes be applied locally.
\emph{Locally-accessible} implementations address this issue and can be used
  to transform any algorithm that uses reads, writes, and comparison primitives
  into one that uses reads and write only, and has the same RMR complexity asymptotically \cite{ghhw:cas, golab:phd}.
Note that because this transformation necessarily introduces busy-waiting (\cite{herl:wait}),
  it can break certain correctness properties of the algorithm, such as
  bounded exit in ME, and bounded doorway in FCFS ME \cite{jand:surv}.
(This is precisely why the transformation was not used in \cite{dg:fcfs}.)

For ME and FCFS ME, the same RMR complexity bounds hold in the CC model as in the DSM model,
  with the exception of so-called Local-Failed Comparison with write-Update (LFCU) systems \cite{jand:lower}.
An LFCU system is a type of cache-coherent machine that is almost never implemented in practice.
In such systems, ME can be solved using reads, writes and Test-And-Set in
  $O(1)$ RMRs, which beats the $\Theta(\log N)$ tight bound for the DSM model.
The complexity results presented in this paper for the CC model hold just as well
  for LFCU systems as for the more standard write-through and write-back systems \cite{pathen:org}.

Mutual exclusion has been studied not only asynchronous systems, but also
  in semi-synchronous systems,
  where consecutive steps by the same process occur at most $\Delta$ time units
  apart for some $\Delta$ \cite{jand:surv}.
In one class of such systems, every process knows $\Delta$,
  and processes have the ability to delay their own execution
  by at least $\Delta$ time units in order to force others
  to make progress.
Given reads, writes and comparison primitives,
  ME can be solved in such systems using $O(1)$ RMRs in the DSM model,
  but in the CC model $\Omega(\log \log N)$ RMRs are needed in the worst case \cite{kim:timing}.
To our knowledge, this is the first result that separates the
  CC and DSM models in terms of RMR complexity for solving a fundamental synchronization problem.
(In this context we ignore complexity bounds for LFCU systems because they are not representative
  of the more common variants of the CC model.)

An interesting complexity separation has also been shown for
  the group mutual exclusion (GME) problem \cite{joung:gme} in asynchronous systems.
GME is a generalization of ME where
  requests for the shared resource are annotated with session IDs,
  and two processes can access the shared resource concurrently provided that
  they request the same session.
Several specifications for this problem have been proposed, differing in
  fairness and progress properties \cite{joung:gme, keane:gme, hadzi:gme, jpt:fgme, bhatt:gme}.
Upper bounds for RMR complexity of GME in asynchronous systems
  range from $O(\log N)$ to $O(N)$ depending on the particular specification,
  and are subject to any lower bound known for ME.
Some algorithms use only atomic reads and writes, while others rely also on CAS and/or
  Fetch-And-Add primitives.
To our knowledge, the only known lower bound on RMRs for GME, except those for ME,
  is the $\Omega(N)$ bound by Hadzilacos and Danek for the DSM model \cite{danek:gme},
  which applies to the version of GME defined by Hadzilacos \cite{hadzi:gme}
  and holds even when there are only two sessions.
This result separates the DSM model from the CC model, in which the two-session case
  can be solved using only $O(\log N)$ RMRs \cite{danek:gme}.
The ``direction'' of the separation is opposite to the one
  for ME in semi-synchronous systems.

Another line of research related to this paper pertains to transforming
  an algorithm that solves some synchronization problem in one
  shared memory model into an algorithm that solves the same problem
  in another model, with the same RMR complexity up to a constant factor.
A few transformations have been proposed for mapping mutual exclusion
  algorithms from the CC model to the DSM model \cite{hh:xform, jand:phi2}.
(In \cite{jand:phi2}, see footnote~7.)
These transformations work only for a restricted class of ME algorithms,
  and to our knowledge no general transformation exists for ME or any other
  widely studied synchronization problem.

%%%%% Done processing related.tex
%%%%% Processing file prob.tex
\renewcommand{\H}{\mathcal{H}}
\SetKwData{True}{true} \SetKwData{False}{false}
\SetKwFunction{Wait}{Wait}
\SetKwFunction{Poll}{Poll}
\SetKwFunction{Signal}{Signal}

\section{Problem specification} \label{sec:pspec}

In this section we specify the \emph{\probname}, for which we
   establish RMR complexity bounds in the remainder of the paper.
The problem belongs to the family of problems where two types of processes,
   called \emph{signalers} and \emph{waiters}, must exchange information
   regarding some event (e.g., a shared resource has been released).
That is, the signalers must ensure that the waiters are aware that
   the event has occurred.
There are several important dimensions within which the safety properties for the
  \probname\ problem can be pinned down:
Is there one signaler/waiter or are there many?
Are the IDs of the signalers/waiters fixed in advance or decided arbitrarily at runtime?
How do waiters learn about the signal?

We also consider two ways to specify the semantics of the \probname.
With \emph{polling semantics}, a solution to the problem
   consists of two procedures, called \Signal{} and \Poll{}.
A signaler issues the signal by calling \Signal{}, which has no return value.
A waiter learns about the signal by calling \Poll{}, which
   returns a Boolean indicating whether the signal has been issued.
A process may call \Signal{} at most once and may call \Poll{} arbitrarily many times until
   such a call returns \True.
(Alternately, we can require that waiters and signalers be distinct.
 This has no effect on the complexity bounds presented in this paper.)
The safety properties for the procedures \Signal{} and \Poll{}
   are stated formally below in Definition~\ref{def_prob}.

\begin{specification}\label{def_prob}
For any history where each process makes zero or more calls to \Signal{} and
   zero or more calls to \Poll{}, in any order, the following hold:
\begin{itemize}
\vspace{-4pt}
\item If some call to \Poll{} returns \True, then some call to \Signal{} has already begun.
\vspace{-4pt}
\item If some call to \Poll{} returns \False, then no call to \Signal{} completed
      before this call to \Poll{} began.
\end{itemize}
\end{specification}

With \emph{blocking semantics}, a solution to the problem
   consists of procedures \Signal{} and \Wait{}.
Procedure \Signal{} is specified as for polling semantics.
A waiter learns about the signal by calling \Wait{}, which
   returns (with a trivial response) only after some call to \Signal{} has begun.
If the signal is never issued, then \Wait{} never returns.
As before, a process may call the two procedures arbitrarily many times, in any order.

To derive a complexity separation between the CC and DSM models,
   we consider one of the most difficult variations of the \probname:
   there is one signaler and there are many waiters, whose IDs are not fixed in advance;
   polling semantics are used; and waiters can terminate after
   a finite number of calls to \Poll{} even if no such call returned \True---a
   key point exploited in our lower bound proof.

Orthogonal to the above safety properties are the progress properties
   a solution may satisfy.
Terminating solutions are certainly possible, and with polling semantics
   wait-free solutions can also be considered.
Note that in terminating solutions, one process may busy-wait for another
   during a call to \Poll{} regardless of the values returned by such calls.
However, if the signal has not yet been issued, each call to \Poll{}
   must eventually terminate provided that the history is fair.

\section{Upper bound for CC model} \label{sec:ubcc}

We are interested in solutions to the \probname\
   that are efficient with respect to RMRs.
More precisely, we would like to minimize the total number of RMRs
   a process incurs across all the procedure calls it makes in a given history.
In the CC model, a very simple and RMR-efficient solution is obtained
   using a single Boolean shared variable, call it $B$, set to \False initially.
With polling semantics, procedure \Signal{} assigns $B := \True$, and \Poll{}
   reads and returns the value of $B$.
With blocking semantics, \Signal{} also assigns $B := \True$,
   and \Wait{} busy-waits until $B = \True$ holds before returning.

The solution described above is wait-free, has $O(1)$ RMR complexity per process
   in the CC model, and uses only atomic reads and writes.
As we show next in Section~\ref{sec:lb}, such a solution cannot be obtained
   in the DSM model, even if we care only about terminating solutions, settle for
   $O(1)$ amortized RMR complexity, and allow
   additional synchronization primitives.
We then discuss additional upper bounds for variations of \probname\ in Section~\ref{sec:ub}.

%%%%% Done processing prob.tex
%%%%% Processing file imp.tex
\section{Lower bound for DSM model}\label{sec:lb}
\newcommand{\Turan}{Tur\'{a}n}

\noindent Our main result, to which we devote the remainder of this section,
          is captured in Theorem~\ref{thm_imp} below.
(In this section, ``\probname'' refers to the particular variation
 described at the end of Section~\ref{sec:pspec}.)

\begin{definition} \label{def:ha}
For any algorithm $\A$ that solves the \probname\ with
  polling semantics, %(consisting of subroutines \Poll{} and \Signal{} for each process)
  let $\H_\A$ denote the set of histories (and all their finite prefixes) where
  each process makes zero or more calls to \Poll{} and zero or more calls to \Signal{},
  in arbitrary order, and then terminates.
\end{definition}

\begin{theorem} \label{thm_imp}
For any deterministic terminating algorithm $\A$ that solves the \probname\
  (with polling semantics) using atomic reads and writes,
  and for any constant $c \in \Z^+$, there exists a constant $k \in \Z^+$ and a history $H \in \H_\A$
  where $k$ processes participate and incur more than $ck$ RMRs in total in the DSM model.
\end{theorem}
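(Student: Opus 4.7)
My plan is to argue by contradiction. Fix a deterministic read/write algorithm $\A$ and a positive integer $c$, and suppose every $k$-participant history in $\H_\A$ incurs at most $ck$ RMRs in the DSM model; the goal is to exhibit a history that violates Specification~\ref{def_prob}. The underlying intuition is that in the DSM model the signaler cannot cheaply reach an audience whose identities are not known in advance: any attempt to amortize its work across waiters must leave most of them unable to learn about \Signal{} without paying a remote read.

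The core tool will be an indistinguishability lemma obtained through a Tur\'an-style counting argument. For each potential waiter I would consider its deterministic \emph{solo trace} --- the sequence of reads and writes it issues during a single \Poll{} call executed alone from the initial configuration --- which, since no \Signal{} has begun, must return \False; likewise, I would define a solo trace for the signaler's \Signal{} call, whose writes form a fixed set $L_S$. I would then build a \emph{conflict graph} $G$ on a large candidate pool $Q \subseteq \Procs$ of waiters, placing an edge between two waiters whose solo traces share a non-local memory location, and placing a ``mark'' on every waiter whose trace reads or writes a location in $L_S$. By charging each edge and each mark to a distinct RMR incurred in a DSM history that runs all of $Q$ concurrently against the signaler, the hypothesis bounds the number of edges and marks by $O(c|Q|)$. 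Taking $|Q|$ large enough in terms of $c$, Tur\'an's theorem together with a pigeonhole step produces a nonempty independent, unmarked set $S \subseteq Q$ and a distinguished waiter $p^\star \in S$ whose solo trace intersects neither $L_S$ nor the solo write set of any other waiter in $S$.

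The contradiction is then obtained by schedule surgery. I would first schedule the waiters in $S \setminus \{p^\star\}$ in a non-interfering interleaving, so that each sees exactly its solo local view and returns \False; next, splice a complete \Signal{} call by the signaler into the history; finally, let $p^\star$ execute its \Poll. Because $p^\star$ is unmarked and independent in $G$, every location it reads holds the same value as in its solo execution, so its \Poll{} returns \False --- yet \Signal{} has already completed before $p^\star$ began polling, directly violating the second clause of Specification~\ref{def_prob}.

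The main obstacle is making the indistinguishability step fully rigorous under the assumption that $\A$ is only terminating rather than wait-free: a \Poll{} call may branch on values read, so the ``solo trace'' has to be defined in a way that is stable under the splicing, and every pairwise interaction that could perturb a branch must be charged to a true DSM RMR (carefully handling the asymmetry in which a shared access may be local on one side and remote on the other). I expect that the cleanest way around this is to build the bad history inductively, extending it one step at a time while maintaining the invariant that a large subset of waiters remains ``quarantined'' with solo-like local views, so that the counting argument is applied to step-by-step accruals of conflicts rather than to a single static conflict graph.
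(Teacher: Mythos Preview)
Your plan has the right adversarial flavour but contains a real gap in the treatment of the signaler. You fix $L_S$ as the write set of the signaler's \emph{solo} \Signal{} trace and declare a waiter ``unmarked'' when its solo reads avoid $L_S$. In your bad schedule, however, the signaler does not run solo: it runs \emph{after} the waiters in $S \setminus \{p^\star\}$ have executed. The signaler may read values those waiters wrote (nothing in your construction forbids its reads from intersecting their write sets), branch on what it sees, and then write to locations outside $L_S$ --- in particular to $p^\star$'s local module. Your ``unmarked'' condition gives no control over this adaptive behaviour, so you cannot conclude that $p^\star$'s subsequent \Poll{} returns \False. The edge-charging step has a related defect: the edges of your conflict graph are defined via solo traces, but the history you invoke to bound them (``all of $Q$ concurrently against the signaler'') need not execute those solo traces, so there is no actual correspondence between edges and RMRs in that history.

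The paper closes this gap with a two-part argument. Part~1 is an inductive construction over $c$ rounds (your last paragraph gestures at this) using Kim--Anderson style erasing and rolling forward to produce a regular history $H_c$ with $\Theta(N^{1/3^c})$ \emph{stable} active waiters --- processes that, continuing solo from the end of $H_c$, incur zero further RMRs no matter how many times they call \Poll{}. Stability, not a single static \Poll{} trace, is what makes the waiters' local views robust. Part~2 then picks a signaler $s$ whose module has not been written and lets $s$ run \Signal{}; but each time $s$ is about to see or touch an active waiter $p$, that waiter is \emph{erased} immediately before $s$'s step. This keeps $s$'s view indistinguishable from one in which only the $\le c$ finished processes are present, yet forces $s$ to incur one RMR for every stable waiter (if $s$ never touches some stable $p$'s module, that $p$'s next \Poll{} would still return \False, violating Specification~\ref{def_prob}). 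After erasing all remaining active processes, the resulting history has at most $c+1$ participants while $s$ alone has incurred $\Omega(N^{1/3^c})$ RMRs. This dynamic erase-as-the-signaler-discovers trick is precisely the idea your static independent-set argument lacks.
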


At the end of this section (see Corollary~\ref{cor_imp}), we generalize the above
result to algorithms that use compare-and-swap (CAS) or load-linked/store-conditional (LL/SC)
in addition to reads and writes.

\subsection{Overview of proof}
To establish Theorem~\ref{thm_imp}, we apply a two-part proof whose
  first part is very similar to Kim and Anderson's construction
  for proving a lower bound on the RMR complexity of adaptive mutual exclusion \cite{kim:adapt}.
The key idea we borrow from them is to construct inductively
   a history where communication among processes is
   minimized using the strategies of erasing and rolling forward.
(Eventually there are no more processes to erase or roll forward,
   and the construction halts.)
The main difference between their construction and ours is the end goal.
Whereas Kim and Anderson apply the maximum possible
   number of rounds in order to trigger many RMRs,
   our goal is to apply just enough rounds so that waiters ``stabilize,''
   meaning that they stop performing RMRs and start busy-waiting on local memory.
In fact, for our purposes it helps to use as few rounds
   as possible, as that maximizes the number of waiters remaining.
Part two of our proof then shows how to extend this construction
   so that a signaler executes many (i.e., more than $ck$) RMRs
   communicating with the waiters.

In the inductive construction, we begin with all $N$ processes participating
   as waiters, making repeated calls to \Poll{}.
The strategies for erasing and rolling forward are analogous to
   Kim and Anderson's.
In our context, rolling forward means that a waiter is allowed to
   complete any ongoing call to \Poll{} it may have, and terminate.
After applying the inductive construction for only a constant number
   of rounds, there are a few processes that have been rolled
   forward, and many more ``invisible'' processes that have
   not communicated with each other.
Next, we proceed to part two.
Here we show that many of the invisible processes created in part one
   have stabilized, and that if a judiciously chosen process calls \Signal{},
   then it can be forced into an expensive (in terms of RMRs)
   ``wild goose chase.''
Since the signaler does not know who the waiters are, it must discover
   them by performing RMRs, but in that case we intervene
   and erase the waiter (if it is invisible) just before the RMR.

In the remainder of this section, we describe in more detail our two-part proof.
We proceed by contradiction, supposing that Theorem~\ref{thm_imp} is false.
That is, we suppose that some deterministic terminating algorithm $\A$
  (consisting of subroutines \Poll{} and \Signal{} for each process)
  exists that solves the \probname\ using atomic reads and writes
  with $O(1)$ amortized RMR complexity.
Then there is a constant $c \in \Z^+$ such that for any $H \in \H_\A$
  (see Definition~\ref{def:ha}),
  if $k$ processes participate in $H$ then the total number
  of RMRs incurred by these processes in the DSM model is at most $ck$.

\subsection{Proof---Part~1}\label{sec:lb:one}
We first give some definitions based on those of \cite{kim:adapt}.
These definitions are specialized for the DSM model, which makes them different from
  Kim and Anderson's more elaborate definitions that deal with the CC and DSM
  models simultaneously.

\begin{definition}
For any $H \in \H_\A$, let $Par(H)$ denote the set of processes that participate in $H$.
The set of \emph{finished processes} in $H$, denoted $Fin(H)$, is the subset
   of $Par(H)$ consisting of processes that have terminated by the end of $H$.
The set of \emph{active processes} in $H$, denoted $Act(H)$, is defined as
   $Par(H) \setminus Fin(H)$ (i.e., participating processes that have not
   yet terminated).
\end{definition}

\begin{definition}
For any $H \in \H_\A$ and any $p,q \in \!Par(H)$, we say that \emph{$p$ sees $q$ in $H$}
   if and only if $p$ reads a variable that was last written by $q$.
\end{definition}

\begin{definition}
For any $H \in \H_\A$ and any $p,q \in \!Par(H)$,
   we say that \emph{$p$ touches $q$ in $H$}
   if and only if $p$ accesses a variable local to $q$.
\end{definition}

\begin{definition} \label{def_val}
For any $H \in \H_\A$, we say that $H$ is \emph{regular}
if and only if all of the following conditions hold:
\begin{enumerate}
\vspace{-4pt}
\item For any distinct $p,q \in Par(H)$, if $p$ sees $q$ in $H$ then $q \in Fin(H)$. \label{rf1}
\vspace{-4pt}
\item For any distinct $p,q \in Par(H)$, if $p$ touches $q$ in $H$
            then $q \in Fin(H)$.  \label{rf2}
\vspace{-4pt}
\item For any variable $v$ written in $H$, if $v$ is written by more than one process
            and the last write is by $p \in Par(H)$, then $p \in Fin(H)$.  \label{rf3}
\end{enumerate}
\end{definition}

\begin{lemma} \label{lem_eras}
For any $H \in \H_\A$ and any $p \in Act(H)$, if no $q \in Par(H)$
  sees $p$ in $H$, then the history $H'$ obtained by erasing all steps
  of $p$ from $H$ is also an element of $\H_\A$.
\end{lemma}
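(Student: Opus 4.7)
The plan is to prove by induction on the length of the prefix of $H$ that erasing all of $p$'s steps yields a legal execution of $\A$ in which every process $q \in Par(H) \setminus \{p\}$ traces out exactly the same sequence of steps, reading exactly the same values and making exactly the same writes, as in $H$. Call this induction hypothesis $(\star)$: after the first $i$ events of $H$ and the corresponding subsequence of events in $H'$ (obtained from $H$ by deleting every step of $p$), the local state of every $q \neq p$ is identical in the two histories.

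For the inductive step, consider the $i$-th event of $H$. If it is a step of $p$, it is simply deleted from $H'$ and there is nothing to verify. If it is a step of some $q \neq p$, then by $(\star)$ the local state of $q$ immediately before this step is the same in both histories, so $q$ attempts the same memory operation on the same variable $v$. If $q$ writes, the same value is written in both histories and $(\star)$ extends. If $q$ reads $v$, let $e$ be the most recent write to $v$ before this read in $H$ (or $\bot$ if $v$ is still initial). By hypothesis no process in $Par(H) \setminus \{p\}$ sees $p$, so $e$ cannot be a write by $p$: if it were, then $q$ would see $p$, contradicting the assumption. Hence $e$ is either $\bot$ or a write by some $r \neq p$; in the latter case, $(\star)$ (applied up to the position of $e$) guarantees that $r$ issues this same write in $H'$. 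Any writes of $p$ to $v$ between $e$ and the read are absent in $H'$, but they are also irrelevant in $H$ because they are overwritten by $e$ (or do not exist). Thus $q$ reads the same value in $H'$ as in $H$, and $(\star)$ extends.

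Given $(\star)$, $H'$ is a legal execution of $\A$: each step by each $q \neq p$ is the deterministic response of $\A$ to $q$'s local state and to the value read from shared memory, all of which agree with $H$. It remains to check that $H'$ lies in $\H_\A$. Each $q \neq p$ makes the same (finite) sequence of calls to \Poll{} and \Signal{} in $H'$ as it does in $H$, and terminates iff it terminates in $H$; the active process $p$ simply does not participate in $H'$, which is permitted since ``zero or more calls'' allows no participation at all. Hence $H'$ is a prefix of a history of the form described in Definition~\ref{def:ha}, and therefore $H' \in \H_\A$.

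The only subtle point in the argument is the handling of a shared variable that $p$ has written to: one must rule out the possibility that deleting $p$'s writes changes the value subsequently read by some $q \neq p$. This is exactly what the ``$q$ does not see $p$'' hypothesis buys us, via the observation that the \emph{last} write before any read by $q$ must come from some process other than $p$ (or be absent entirely), so $p$'s intervening writes are irrelevant. I expect this case analysis on the read operation to be the main obstacle; the rest is routine bookkeeping.
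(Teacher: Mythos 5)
Your proposal is correct. Note that the paper itself gives no proof of Lemma~\ref{lem_eras} at all---it is stated and then ``used implicitly'' in the subsequent proof sketches---so there is nothing to diverge from; your indistinguishability induction is the standard erasure argument (inherited from Kim and Anderson's technique) that the authors evidently have in mind, and the key step is exactly where you place it: the last write to $v$ preceding any read by a surviving process cannot be a write of $p$, or that process would see $p$, so deleting $p$'s steps cannot change the value returned by any read. One small slip: the sentence about ``writes of $p$ to $v$ \emph{between} $e$ and the read'' is vacuous, since $e$ is by definition the most recent write before the read; the relevant observation is that any writes of $p$ to $v$ occur \emph{before} $e$ and are overwritten by it, which is what your argument actually uses.
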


\noindent
We will use Lemma~\ref{lem_eras} implicitly in the proof sketches that follow
   whenever we need to erase an active process from a history.

At this point we depart from Kim and Anderson's definitions \cite{kim:adapt}
   and introduce a few of our own.

\begin{definition} \label{def_stab}
For any regular $H \in \H_\A$ and any $p \in Act(H)$, $p$
   is \emph{stable} if and only if for any extension $H'$ of $H$ where
   $p$ runs solo and continues calling \Poll{} repeatedly,
   $p$ incurs zero RMRs in $H'$ after the prefix $H$.
Otherwise $p$ is \emph{unstable}.
\end{definition}

Our end goal in this part of the proof is to show that $\H_\A$ contains a regular
  history such that $Act(H)$ contains many more stable processes than $Fin(H)$.
To that end, we will construct inductively a sequence of regular computations
  $H_1$, $H_2$, $H_3$, ..., $H_c$ and prove the following invariant in the case
  when $N$ is large enough (with respect to $c$):

\begin{definition} \label{def_si}
For any $i$, $0 \leq i \leq c$, let $S(i)$ denote the statement
   that $\H_\A$ contains a regular history $H_i$ satisfying all of the following
   properties:
\begin{enumerate}
\vspace{-4pt}
\item $|Fin(H_i)| \leq i$  \label{inv:fin}
\vspace{-4pt}
\item $|Act(H_i)| \geq N^{1/3^i}$  \label{inv:inv}
\vspace{-4pt}
\item Each $p \in Act(H_i)$ incurs at most $i$ RMRs. \label{inv:rmri}
\vspace{-4pt}
\item Each unstable $p \in Act(H_i)$ incurs exactly $i$ RMRs. \label{inv:rmriu}
\vspace{-4pt}
\item Each $p \in Fin(H_i)$ incurs at most $ci$ RMRs. \label{inv:rmrf}
\end{enumerate}
\end{definition}

\begin{lemma} \label{lem_inv}
If $N$ is large enough then for any $i$, $0 \leq i \leq c$, $S(i)$ holds.
\end{lemma}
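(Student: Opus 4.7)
I plan to prove $S(i)$ by induction on $i$, following the inductive erasure / roll-forward methodology of Kim and Anderson~\cite{kim:adapt}. The base case $i=0$ is an empty history in which all $N$ processes are regarded as active with zero RMRs, so conditions (\ref{inv:fin})--(\ref{inv:rmrf}) hold trivially.

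For the inductive step, suppose $H_i$ satisfies $S(i)$ and split $Act(H_i)$ into its stable and unstable subsets. If at least half of $Act(H_i)$ is stable, I erase every unstable active process one at a time---legal because regularity of $H_i$ guarantees no one sees them, so Lemma~\ref{lem_eras} applies repeatedly---and set $H_{i+1}$ to the result. The surviving active set has size at least $|Act(H_i)|/2 \geq N^{1/3^i}/2$, which exceeds $N^{1/3^{i+1}}$ for $N$ large enough in terms of $c$. No process is newly finished, every survivor is stable, so (\ref{inv:rmriu}) becomes vacuous while the remaining invariants are inherited from $S(i)$.

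Otherwise at least half of $Act(H_i)$ is unstable. For each such $p$ I form a solo continuation $\sigma_p$ of $p$ from $H_i$ up through its next RMR $op_p$, which accesses some variable $v_p$ (well defined because $p$ is unstable). Pigeonholing on whether $op_p$ is a read or a write retains at least a quarter of $Act(H_i)$ in one category. In either category, the strategy is to identify a single active process whose presence would threaten regularity once these $op_p$'s are executed---the last writer of $v_p$ in the read case, or the owner of the memory module containing $v_p$ in the write case---and to apply pigeonhole, together with a Tur\'{a}n-type independent-set argument in the write case, to extract a subset $S$ of these unstable processes whose pending $\sigma_p$'s mutually commute and whose single common ``threat partner'' can absorb the only allotted promotion into $Fin(H_{i+1})$. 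I then append the composed $\sigma_p$'s of $S$ to $H_i$ and erase every remaining unstable process via Lemma~\ref{lem_eras}, obtaining $H_{i+1}$. Because at most one new process is finished, and its RMR count can grow by at most $c$ under the standing contradiction hypothesis on $\A$, property (\ref{inv:rmrf}) is preserved; each $p \in S$ has performed exactly one new RMR, yielding (\ref{inv:rmri}) and (\ref{inv:rmriu}) with bound $i+1$.

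The main obstacle I anticipate is the write subcase: to guarantee that the retained writes genuinely commute, I must discard any pair that targets the same variable or that, after composition, creates a ``sees'' chain through a third active process, and then show that the resulting independent set still has size at least $|Act(H_i)|^{1/3}$. This degree-bounded independent-set bound is what forces the $1/3^i$ exponent in (\ref{inv:inv}). I also need to check that appending the extension does not silently destabilize the previously-stable processes of $Act(H_i)$; this will follow because the appended writes target only the local memory of the (single) promoted finisher, leaving the local memory modules of all surviving stable processes untouched, so their future solo runs behave exactly as they did from $H_i$.
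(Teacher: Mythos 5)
Your skeleton (induction on $i$, at most one new finished process per round, Tur\'{a}n-type independent sets to kill see/touch conflicts, and the standing contradiction hypothesis to bound the finisher's extra RMRs) matches the paper, but the central counting step is missing and the device you substitute for it does not work. The paper's round proceeds by a dichotomy on the pending (``next'') RMRs of the $X$ surviving unstable processes: either at least $\floor{\sqrt{X}}$ of them target the \emph{same} variable $v$, in which case all other unstable processes are erased, the pending operations on $v$ are applied, and the \emph{last writer} of $v$ is rolled forward to termination (this is the one promotion into $Fin$, needed to restore property~\ref{rf3} of Definition~\ref{def_val}); or the pending RMRs target at least $\floor{\sqrt{X}}$ \emph{distinct} variables, in which case one process per variable is kept and a second conflict graph removes writes to variables already written by active processes. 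It is this square-root pigeonhole --- not any independent-set bound --- that produces the $N^{1/3^i}$ in property~\ref{inv:inv}: Tur\'{a}n only costs a constant factor per round, which would give $N/C^i$, whereas the invariant needs $|Act(H_{i+1})| \geq |Act(H_i)|^{1/3}$ precisely because a round may shrink the active set to roughly the square root of its size. Your proposal never makes this dichotomy, so your claim that a ``degree-bounded independent-set bound forces the $1/3^i$ exponent'' cannot be right.

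The ``single common threat partner'' is the concrete point of failure. In your read case the partner is the last writer of $v_p$, and in your write case the owner of the module containing $v_p$; but there are up to $N$ candidate last writers and $N$ candidate module owners, so pigeonholing the retained unstable processes onto a \emph{single common} partner keeps only an $X/N$ fraction, which is useless. The paper instead \emph{erases} the threatened active processes via a degree-bounded conflict graph (this is how properties~\ref{rf1} and~\ref{rf2} are restored); it never promotes a merely-threatened process into $Fin$. Moreover, ``absorbing the promotion into $Fin$'' requires actually rolling that process forward through the remainder of its \Poll{} call, which generates fresh sees/touches whose collateral erasures must themselves be bounded --- the paper bounds them by $c(i+1)$ via the contradiction hypothesis, and applies this only to the last writer of the commonly-written variable, not to an arbitrary module owner with no pending operation. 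Finally, your claim that the appended writes ``target only the local memory of the promoted finisher'' has no justification: the algorithm chooses the target variables, and protecting the surviving stable processes is exactly what the conflict-graph erasures for properties~\ref{rf1}--\ref{rf2} accomplish. Your treatment of the mostly-stable case (erase all unstable processes) is fine and slightly simpler than the paper's, but the unstable case needs the Kim--Anderson roll-forward/erase dichotomy to go through.
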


\vspace{-4pt}
\textsc{Proof sketch:}
The analysis is similar to that of \cite{kim:adapt} at a high level,
  with some technical differences.
We proceed by induction.
For $S(0)$, note that in $H_0$ there $N$ active processes and no finished processes,
   and no process has performed an RMR.
Now for any $0 \leq i < c$ suppose that $S(i)$ holds.
We must prove $S(i+1)$.
To that end, we will construct $H_{i+1}$ from $H_i$.
If there are no unstable processes in $Act(H_i)$, we simply
   let each active process take one more step,
   which is necessarily local step.

Otherwise, we let each unstable process in $Act(H_i)$ take steps until it is about to perform an RMR,
  and determine for each such process its next RMR.
We will refer to these as the ``next RMRs'' of the unstable processes.
Allowing each such process to apply its next RMR may yield a history $H_i'$ that is not
  regular because it violates one of the properties in Definition~\ref{def_val}.
As in Kim and Anderson's proof, properties~\ref{rf1} and \ref{rf2} of~Definition~\ref{def_val}
  are dealt with easily by erasing at most a constant fraction of active processes.
To that end, we construct a ``conflict graph'' where vertices represent
  active processes and an edge $\bc{p,q}$ exists if and only if
  $p$ sees or touches $q \neq p$ in its next RMR (or vice versa).
Since a process can see or touch at most one other process by performing an RMR,
  the conflict graph has average degree $d \leq 4$.
(For each process we add at most two edges, and each edge contributes
  to the degree of two vertices.)
By \Turan's theorem, the conflict graph contains an independent set
  containing at least $\f{1}{d+1} = \f{1}{5}$ of the active processes.
Keeping these and erasing the remaining active processes resolves all the conflicts.
Once this is done, we apply any next RMRs that perform a read, and consider
  the remaining RMRs for property~\ref{rf3} of~Definition~\ref{def_val}.
We consider two cases, as in Kim and Anderson's proof: one dealt
  with by rolling forward, the other by erasing.
Let $X$ denote the number of unstable processes remaining after
  properties~\ref{rf1}--\ref{rf2} of Definition~\ref{def_val} are dealt with.

\paragraph{Roll-forward case}
If at least $\floor{\sqrt{X}}$ unstable processes are about to access the same
   shared variable $v$ in their next RMRs,
   we erase all other unstable processes,
   and allow the ones remaining to apply their next RMRs on $v$
   in some arbitrary order.
The last process to write $v$, call it $r$, is then rolled forward.
As we roll forward $r$, it may see or touch other processes.
If $r$ sees or touches an active process $p$, then this is an RMR for $r$ and
   we erase $p$.
(It follows from $H_i$ being regular that $r$ cannot see $p$ via a local access
   prior to $r$'s next RMR.)
If $r$ executes more than $c(i+1)$ RMRs in total as a result of being
  rolled forward, then we obtain a contradiction easily:
Erasing all other active processes we obtain
  a history where there are at most $i+1$ finished processes,
  namely $i$ from $H_i$ (by property~\ref{inv:fin} of~Definition~\ref{def_si}) plus $r$,
  and no other processes.
The total number of RMRs in this history is more than $c(i+1)$, which contradicts
  our definition of $\A$.
Thus, by rolling forward $r$ we erase at most $c(i+1) \leq c^2 + c$ active processes.
The number of active processes remaining is at least $\floor{\sqrt{X}} - (c^2 + c + 1)$.

\paragraph{Erasing case}
If there is no single variable that is about to be accessed by at least $\floor{\sqrt{X}}$
  unstable processes in their next RMRs,
  then these RMRs collectively access at least $\floor{\sqrt{X}}$ distinct variables.
For each such variable, we choose arbitrarily an unstable process that will
  access it, and we erase all the other unstable processes.
This leaves at least $\floor{\sqrt{X}}$ active processes.
It is possible that some of the next RMRs of these remaining processes
   are about to write registers that have already been written by
   active processes.
We can eliminate this problem by erasing some of the active processes.
To that end, we construct a conflict graph where each vertex is an active
   process and an edge $\bc{p,q}$ exists if and only if
   $p$ writes in its next RMR a variable previously written by $q \neq p$.
Since each active process has at most one next RMR, the conflict
   graph has average degree $d \leq 2$.
(For each process we add at most one edge, and each edge contributes
  to the degree of two vertices.)
By \Turan's theorem, the conflict graph contains an independent set
  containing at least $\f{1}{d+1} = \f{1}{3}$ of the active processes.
Keeping these and erasing the remaining active processes resolves all the conflicts,
  leaving $\ceil{\floor{\sqrt{X}} \times \f{1}{3}}$ active processes.

\bigskip
Let $H_{i+1}$ denote the history obtained from $H_i$ by our construction.
It follows from our construction that $H_{i+1}$ is regular, and so it
   remains to show that it satisfies properties~\ref{inv:fin}--\ref{inv:rmrf}
   of $S(i+1)$ given that $S(i)$ holds.
Property~\ref{inv:fin} holds because $|Fin(H_{i+1})| \leq |Fin(H_{i})| + 1 \leq i$.
Property~\ref{inv:inv} holds because $|Act(H_{i+1})| \geq |Act(H_{i})|^{1/3} \geq N^{1/3^{(i+1)}}$ for $N$ large enough.
Properties~\ref{inv:rmri} and \ref{inv:rmriu} follow from $S(i)$ and the
   application of the next RMRs round $i+1$ of the construction.
Property~\ref{inv:rmrf} follows from $S(i)$ and the fact that any
   process being rolled forward in round $i+1$ incurs at most $c(i+1)$
   RMRs in $H_{i+1}$ under our original supposition that Theorem~\ref{thm_imp} is false.
Thus, $S(i+1)$ holds.
\qed

\subsection{Proof---Part~2}

\begin{lemma} \label{lem_halfstab}
In the history $H_c$ referred to by Lemma~\ref{lem_inv}
   there are at least $\floor{N^{1/3^c}/2}$ stable processes.
\end{lemma}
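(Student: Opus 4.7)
I would argue by contradiction. Let $n = N^{1/3^c}$, and suppose that $Act(H_c)$ contains fewer than $\lfloor n/2 \rfloor$ stable processes, so that the number $u$ of unstable processes in $Act(H_c)$ exceeds $n/2$.

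The first step is to apply one additional round---call it round $c+1$---of the inductive construction used in the proof of Lemma~\ref{lem_inv}, obtaining a regular history $H_{c+1} \in \H_\A$ with $|Fin(H_{c+1})| \leq c+1$ and with each unstable active process in $H_{c+1}$ incurring exactly $c+1$ RMRs. The Tur\'{a}n / roll-forward / erasing analysis used for round $c+1$ ensures that at least $\Bigomega(\sqrt{u}) - \Bigo(c^2) = \Bigomega(\sqrt{n})$ of the $u$ original unstable processes successfully apply their next RMR and survive as active processes in $H_{c+1}$.

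The second step is to invoke Lemma~\ref{lem_eras} (whose hypothesis is satisfied because $H_{c+1}$ is regular, so no process sees a stable active one) in order to erase every stable active process from $H_{c+1}$, producing a history $H^* \in \H_\A$ in which every active process is unstable and hence has exactly $c+1$ RMRs. Writing $u' = |Act(H^*)|$, the total RMR count in $H^*$ is at least $(c+1) u'$, while $|Par(H^*)| = u' + |Fin(H^*)| \leq u' + c + 1$. The amortized hypothesis then yields
\[
    (c+1)\, u' \;\leq\; c \bigl( u' + c + 1 \bigr) \;=\; c\, u' + c(c+1),
\]
hence $u' \leq c(c+1)$. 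For $N$ large enough that $\Bigomega(\sqrt{n}) > c(c+1)$, this contradicts the lower bound $u' = \Bigomega(\sqrt{n})$ coming from the first step.

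The main obstacle will be making the first step's lower bound really count: although round $c+1$ forces $\Bigomega(\sqrt{n})$ of the original unstable processes to perform an extra RMR, a process could in principle become stable immediately afterward, in which case it is discarded in the second step and does not contribute to $u'$. One natural way to handle this is iteratively---if many of the $u$ originally unstable processes do become stable after their extra RMR, then $H_{c+1}$ already contains plenty of newly stable processes, and a variant of the same argument can be restarted from $H_{c+1}$ in place of $H_c$ (possibly repeated a bounded number of times) until one obtains a history whose unstable count is large enough to trigger the contradiction above.
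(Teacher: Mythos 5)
Your overall strategy (contradiction plus a counting argument comparing total RMRs against the number of participants) is the right one and is essentially the paper's, but as written your argument has a gap that you yourself flag in the last paragraph and do not close. The problem lies entirely in your second step: you insist on retaining only processes that are still \emph{unstable} in $H_{c+1}$, because you want to invoke the invariant ``each unstable process has incurred exactly $c+1$ RMRs.'' But the counting argument does not need instability in $H_{c+1}$ --- it needs the RMR count. Every process that was unstable in $H_c$ and is allowed to apply its next RMR has incurred $c+1$ RMRs afterwards, \emph{whether or not it has become stable in the meantime}: property~\ref{inv:rmriu} of Definition~\ref{def_si} already gives it exactly $c$ RMRs in $H_c$, and the extra step adds one. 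So the correct move is to keep all survivors that performed the extra RMR and to erase only the processes that were already stable in $H_c$. With that change, writing $m$ for the number of survivors, the total RMR count is at least $(c+1)m$ while at most $m+c+1$ processes participate, so the amortized hypothesis forces $(c+1)m \leq c(m+c+1)$, i.e.\ $m \leq c(c+1)$, contradicting $m = \Bigomega(\sqrt{N^{1/3^c}})$ for large $N$. The iterative ``restart from $H_{c+1}$'' you propose as a patch is both unnecessary and unconvincing as stated: it is unclear what invariant the restarted argument maintains or why the number of restarts is bounded.

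The paper's proof makes precisely the observation above and is also leaner in your first step: it does not rerun the round machinery (Tur\'{a}n, roll-forward, erasing) at all. It simply erases all active processes except exactly $\floor{N^{1/3^c}/2}$ unstable ones and lets each of those perform one more RMR; since the resulting history need not be regular and no further induction is performed, conflict resolution is not required, and one retains $\floor{N^{1/3^c}/2}$ processes with $c+1$ RMRs each rather than only $\Bigomega(\sqrt{N^{1/3^c}})$. Either quantity is large enough to trigger the contradiction, so your extra machinery is harmless but wasted; the real defect to repair is the stability issue above.
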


\vspace{-4pt}
\textsc{Proof sketch:}
Recall that, by Lemma~\ref{lem_inv}, the following hold:
   $|Fin(H_c)| \leq c$,
   $|Act(H_c)| \geq N^{1/3^c}$, and
   that each unstable process in $Act(H_c)$ has performed exactly $c$ RMRs in $H_c$.
Now suppose for contradiction that fewer than $\floor{N^{1/3^c}/2}$ of the active
   processes are stable.
Then at least $\floor{N^{1/3^c}/2}$ are unstable, and moreover each one
   has performed exactly $c$ RMRs in $H_c$.
Let $H_c'$ be the history obtained from $H_c$ by erasing all active
   processes except for exactly $\floor{N^{1/3^c}/2}$ unstable ones,
   and then letting each unstable process make one more RMR
   (in any order).
This history satisfies the following:
  $|Fin(H_c')| = |Fin(H_c)| \leq c$,
  $|Act(H_c')| = \floor{N^{1/3^c}/2}$,
  and each process in $Act(H_c')$ has incurred $c+1$ RMRs.
Consequently, $|Par(H_c')| \leq c + \floor{N^{1/3^c}/2}$, and the total number
  of RMRs incurred in $H_c'$ is at least $(c+1)\floor{N^{1/3^c}/2} = c \floor{N^{1/3^c}/2} + \floor{N^{1/3^c}/2}$,
  which is greater than $c$ times $|Par(H_c')|$ when $N$ is large enough.
This contradicts the RMR complexity of $\A$.
\qed

\begin{lemma} \label{lem_manstab}
There exists a regular history $H \in \H_\A$ in which there are exactly $\floor{N^{1/3^c}/2}$ stable active
  processes, at most $c$ finished processes, and no other processes participating.
Furthermore, in $H$, each process that participates incurs at most $c^2$ RMRs in the DSM model.
\end{lemma}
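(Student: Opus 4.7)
The plan is to build $H$ by starting from the history $H_c$ supplied by Lemma~\ref{lem_inv} and then erasing active processes until the counts match. By Lemma~\ref{lem_halfstab}, $H_c$ contains at least $\floor{N^{1/3^c}/2}$ stable active processes, so one can pick exactly $\floor{N^{1/3^c}/2}$ of them to retain and erase every other active process of $H_c$, i.e., all unstable active processes together with the surplus stable ones. The resulting history $H$ keeps $Fin(H) = Fin(H_c)$, which by property~\ref{inv:fin} of $S(c)$ has size at most $c$.

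The key point to establish is that these erasures are legal applications of Lemma~\ref{lem_eras} and leave the surviving processes' view of memory unchanged. First, since $H_c$ is regular, property~\ref{rf1} of Definition~\ref{def_val} says that no participating process sees any active process in $H_c$, so Lemma~\ref{lem_eras} applies to the first erasure. I would then argue inductively that erasures introduce no new ``sees'' relation targeting a surviving process: every read by a surviving process in the reduced history has its last writer lying in $Fin(H_c)$ (again by property~\ref{rf1}), and such writers are never erased, so the last-writer and value of every such read are preserved. Consequently the steps of every surviving process are the same in $H$ as in $H_c$, Lemma~\ref{lem_eras} continues to apply at each erasure, and the final $H$ lies in $\H_\A$. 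The same invariance gives regularity of $H$: sees and touches relations of $H$ are subsets of those of $H_c$, and for any multiply-written variable the last writer of $H_c$ (which lies in $Fin(H_c)$) is retained and remains the last writer. It also gives that each retained stable process is still stable in $H$, since in its solo extension it performs only local operations and by property~\ref{rf2} every writer to its local memory is either itself or a finished process, so it reads the same values as in the solo extension of $H_c$.

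Bounding RMRs is then bookkeeping. Each finished process incurs at most $c \cdot c = c^2$ RMRs by property~\ref{inv:rmrf} of $S(c)$, and each surviving stable active process incurs at most $c \leq c^2$ RMRs by property~\ref{inv:rmri}. Together with $|Fin(H)| \leq c$ and $|Act(H)| = \floor{N^{1/3^c}/2}$ and no other participants, this yields the lemma. The main obstacle I expect is not conceptual but careful: rigorously verifying that the memory state observed by every surviving process is invariant under the erasures, so that ``stable'' is preserved and regularity transfers; every other step is a direct consequence of Lemmas~\ref{lem_inv}, \ref{lem_halfstab}, and~\ref{lem_eras} combined with the invariants of $S(c)$.
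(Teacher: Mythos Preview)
Your proposal is correct and follows essentially the same approach as the paper: start from $H_c$, invoke Lemma~\ref{lem_halfstab} to ensure enough stable active processes exist, erase all active processes except exactly $\floor{N^{1/3^c}/2}$ stable ones, and read off the RMR bounds from properties~\ref{inv:rmri} and~\ref{inv:rmrf} of $S(c)$. The paper's proof is only a sketch and omits the verification that erasures are legal and that regularity and stability are preserved; your careful treatment of these points (via property~\ref{rf1} of Definition~\ref{def_val} and the fact that finished processes are never erased) simply fills in details the paper leaves implicit. Note also that the paper's sketch contains an apparent typo (``unstable'' where ``stable'' is clearly intended), which your version avoids.
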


\vspace{-4pt}
\textsc{Proof sketch:}
By Lemma~\ref{lem_halfstab}, the history $H_c$ referred to by Lemma~\ref{lem_inv}
  contains at least $\floor{N^{1/3^c}/2}$ stable processes.
It also contains at most $c$ finished processes by Lemma~\ref{lem_inv}.
To construct $H$, take $H_c$ and erase all active processes except $\floor{N^{1/3^c}/2}$
   unstable ones.
The remaining processes incur the same number of RMRs in $H$ as in $H_c$,
  which is bounded in Lemma~\ref{lem_inv}:
  each active process in $H_c$ incurs at most $c$ RMRs,
  and each finished process incurs at most $c^2$ RMRs.
Thus, each process incurs at most $c^2$ RMRs, as wanted.
\qed

\vspace{6pt}
We now describe how to use the history $H$ referred to by Lemma~\ref{lem_manstab} to derive
  a contradiction.

\begin{lemma} \label{lem_toormr}
For large enough $N$, there exists a history $H' \in \H_\A$ and a constant $k \in \Z^+$, such that at most $k$ processes participate
  in $H$ and yet the total number of RMRs incurred in $H$ is more than $ck$.
\end{lemma}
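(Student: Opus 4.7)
The plan is to extend $H$ (from Lemma~\ref{lem_manstab}) by having a fresh process $s\notin Par(H)$ (which exists because $|Par(H)|\le \lfloor N^{1/3^c}/2\rfloor + c$ is much smaller than $N$) call \Signal{}, while the adversary schedules $s$ in isolation and erases active waiters on the fly. The goal is a history $H'\in\H_\A$ whose participants are exactly the $c$ finished processes from $H$ plus $s$, so that $k:=c+1$, and in which $s$ alone is forced to incur $\Omega(M)$ RMRs, where $M=\lfloor N^{1/3^c}/2\rfloor$.

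The adversary strategy is as follows. I simulate $s$'s steps one by one. Before every RMR of $s$ on a variable $v$, I first erase (via Lemma~\ref{lem_eras}) the local owner of $v$ if that process is a surviving active waiter, and then I erase the current last writer of $v$ if it is (still) a surviving active waiter, before letting the RMR take place. I maintain as an invariant that no surviving active waiter has been seen or touched by any participant up to the current point: this holds at the start by regularity of $H$, and each RMR of $s$ preserves it by construction since I erase precisely those active waiters whom $s$'s RMR would see or touch. A central technical observation, based on property~\ref{rf3} of Definition~\ref{def_val}, is that each such ``see'' triggers at most one further erasure, because if $v$'s last writer is an active waiter then by regularity that waiter is $v$'s sole writer, so erasing it leaves $v$ at its initial value and the chain terminates. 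Hence each RMR of $s$ erases at most two active waiters.

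Next I argue that all $M$ active stable waiters in $H$ must be erased. Suppose toward a contradiction some $p^*\in Act(H)$ survives to the end of $s$'s \Signal{}. By the invariant, the local memory of $p^*$ is unchanged from $H$, and $p^*$ is a stable waiter; consequently, if I schedule $p^*$ to call \Poll{} after \Signal{} completes, the determinism of $\A$ and the stability of $p^*$ ensure that this \Poll{} touches only local memory and returns \False{}, exactly as any preceding \Poll{} call by $p^*$ did in $H$. But \Signal{} has already completed, so by Specification~\ref{def_prob} this \Poll{} must return \True{}, contradicting the supposition that $\A$ solves the \probname{}. Hence all $M$ waiters are erased, and the number $r$ of RMRs $s$ performs satisfies $2r\ge M$, i.e., $r\ge M/2$.

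The resulting history $H'$ has $k=c+1$ participants (the $c$ finished processes from $H$ plus $s$), and the total RMR count is at least $r\ge M/2$. For $N$ large enough, $M/2>c(c+1)=ck$, which gives the lemma. The delicate piece of the argument is the invariant maintenance: each on-the-fly erasure must be a legitimate application of Lemma~\ref{lem_eras}, which is exactly where regularity property~\ref{rf3} comes in, by forcing the chain of reverts behind each RMR of $s$ to have length at most two. The remainder is routine bookkeeping.
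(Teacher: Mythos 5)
Your overall strategy is the paper's: run a carefully chosen signaler $s$ solo, erase each surviving active waiter just before $s$ would see or touch it, argue that every stable waiter must eventually be erased (else its next \Poll{} would wrongly return \False{} after \Signal{} completes), and conclude that $s$ incurs roughly one RMR per stable waiter while only $k=c+1$ processes participate. Your explicit use of property~\ref{rf3} of Definition~\ref{def_val} to cut off the chain of ``last writers'' behind each erased step is a correct and welcome elaboration of a point the paper leaves implicit.

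However, there is a genuine gap in how you choose $s$. You take $s$ to be any process with $s\notin Par(H)$, but that does not prevent participants of $H$ from having \emph{written variables in $s$'s own memory module} (regularity property~\ref{rf2} only constrains touches of processes in $Par(H)$, so an active waiter may freely write into the module of a non-participant). If some variable $v$ local to $s$ was last written by a surviving active waiter $p$, then $s$ can read $v$ with a \emph{local} step, thereby seeing $p$ without performing an RMR. At that point your construction breaks in one of two ways: either you do not erase $p$ and your invariant (``no surviving active waiter has been seen'') fails, which in turn invalidates later applications of Lemma~\ref{lem_eras}; or you do erase $p$, but the erasure is not charged to any RMR of $s$, so the accounting $2r\ge M$ no longer follows and you cannot conclude that $s$ incurs $\Omega(M)$ RMRs. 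The paper closes exactly this hole with a counting argument: the at most $c+\floor{N^{1/3^c}/2}$ participants, each incurring at most $c^2$ RMRs, write to at most $(c+\floor{N^{1/3^c}/2})(1+c^2)<N$ distinct memory modules, so for large $N$ one can pick $s$ whose module is \emph{unwritten} in $H$. With that choice, every step in which $s$ sees or is about to write the memory of an active waiter is necessarily remote, and your accounting goes through. Your proof needs this stronger selection of $s$; the rest of the argument is sound.
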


\vspace{-4pt}
\textsc{Proof sketch:}
Let $H$ be the history referred to by Lemma~\ref{lem_manstab}.
In this history, each process that participates incurs at most $c^2$ RMRs,
  and at most $c + \floor{N^{1/3^c}/2}$ processes participate.
Thus, in total the participating processes write to at most
  $(c + \floor{N^{1/3^c}/2})(1 + c^2)$ distinct memory modules.
(Each process may write its own module, and at most $c^2$ remote ones.)
For $N$ large enough, this means there is some process $s$ whose memory module
  is not written in $H$.

Now construct $H'$ from $H$ as follows.
First, let each stable process run solo, one by one, completing any pending call to \Poll{} that it may have.
Since each process is stable, by Definition~\ref{def_stab}, it will not incur any RMRs doing so.
Now let $s$ run solo and make a call to \Signal{}, which must eventually terminate. % Why
As $s$ performs this call, each time $s$ is about to see a process $p$
   that is active in $H$, or is about to write
   memory local to $p$, erase $p$ and then allow $s$ to take its step.
Note that this step by $s$ must be an RMR because $s \neq p$, and because by our choice of $s$,
   process $p$ has never written memory local to $s$.
It follows that $s$ performs one such RMR for each stable process $p \in Act(H)$,
   otherwise after $s$ completes its call to \Signal{} there is some $p$ that
   remains stable and whose local memory is in the same state as at the end of $H$.
Consequently, if $p$ now calls \Poll{}, then its call returns the same response as $p$'s last call,
   namely \False, contradicting the specification of the \probname\ (see Definition~\ref{def_prob}).
Thus, $s$ performs at least $\floor{N^{1/3^c}/2}$ RMRs.
Finally, erase any active process that remains, which leaves only $s$ and at most $c$ finished
   processes participating in the history.
Let $H'$ denote this history.  Note that
   at most $k = c+1$ processes participate in $H'$, and yet $s$ incurs at least $\floor{N^{1/3^c}/2}$
   RMRs in $H'$ in the DSM model.
For large enough $N$, this means that the number of RMRs is more than $ck = c(c+1)$, as wanted.
\qed

\vspace{6pt}
\textsc{Proof of Theorem~\ref{thm_imp}:}
Lemma~\ref{lem_toormr} contradicts our assumption on the RMR complexity of $\A$.
Thus, $\A$ does not exist and Theorem~\ref{thm_imp} holds.
\qed

\begin{corollary} \label{cor_imp}
For any deterministic algorithm $\A$ that solves the \probname\ (with polling semantics)
  using atomic reads and writes, and either CAS or LL/SC,
  and for any constant $c \in \Z^+$, there exists a constant $k \in \Z^+$ and a history $H \in \H_\A$
  where $k$ processes participate and incur more than $ck$ RMRs in total in the DSM model.
\end{corollary}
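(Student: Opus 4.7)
The plan is to reduce Corollary~\ref{cor_imp} to Theorem~\ref{thm_imp} by invoking the locally-accessible implementations of CAS and LL/SC from reads and writes discussed in Section~\ref{sec:relwork} (see \cite{ghhw:cas, golab:phd}), which preserve asymptotic RMR complexity in both the CC and DSM models.

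Specifically, I would proceed by contradiction. Suppose there is a deterministic (terminating) algorithm $\A$ solving the \probname\ using reads, writes, and CAS (or LL/SC) whose amortized DSM RMR complexity is $O(1)$ --- that is, in every $H \in \H_\A$ in which $k$ processes participate, the total RMRs are at most $ck$ for some fixed constant $c$. I would then construct $\A'$ from $\A$ by replacing each CAS or LL/SC invocation with its locally-accessible read/write implementation. Because each such implementation incurs only $O(1)$ RMRs per operation in the DSM model, the amortized RMR complexity of $\A'$ is still bounded by some constant $c'$. The outputs and linearization order of Signal and Poll are preserved by the simulation, so $\A'$ still satisfies Specification~\ref{def_prob}. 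Applying Theorem~\ref{thm_imp} to $\A'$ with the constant $c'$ then yields the required contradiction.

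The key steps in order are: (i) invoke the existence of a locally-accessible read/write simulation of CAS and LL/SC with $O(1)$ DSM RMR overhead per simulated operation; (ii) verify that the simulation, composed with $\A$, still solves the \probname\ as specified in Definition~\ref{def_prob} (i.e., its Signal and Poll wrappers satisfy the safety properties); (iii) confirm that the composition preserves termination in every fair history in which no process crashes, so that Theorem~\ref{thm_imp} applies; (iv) conclude that $\A'$ contradicts Theorem~\ref{thm_imp}, hence $\A$ cannot exist.

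The hard part is step (iii): locally-accessible simulations inherently introduce busy-waiting, and as noted in Section~\ref{sec:relwork} this can break certain strong progress properties. However, here we require only that $\A'$ be terminating in the sense of Section~\ref{sec:model}, not wait-free, so any unbounded waiting inside a simulated CAS/LL/SC is acceptable provided every simulated primitive eventually returns in a fair schedule. I would check this directly from the progress guarantees of the chosen simulation (e.g., \cite{ghhw:cas, golab:phd}), which suffice because termination is preserved under composition with wait loops that make progress whenever contenders take infinitely many steps. With this verified, the corollary follows immediately from Theorem~\ref{thm_imp}.
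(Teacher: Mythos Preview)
Your proposal is correct and follows essentially the same route as the paper: assume for contradiction that $\A$ has $O(1)$ amortized RMR complexity, replace each CAS or LL/SC base object by the locally-accessible $O(1)$-RMR read/write implementations of \cite{ghhw:cas, golab:phd}, and observe that the resulting read/write algorithm $\A'$ is terminating, solves the \probname, and still has $O(1)$ amortized RMR complexity---contradicting Theorem~\ref{thm_imp}. Your explicit discussion of step~(iii), namely that the busy-waiting introduced by the simulation preserves the \emph{terminating} progress property required by Theorem~\ref{thm_imp} even though it would break wait-freedom, is a point the paper asserts without elaboration; otherwise the arguments coincide.
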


\begin{proof}
Suppose for contradiction that Corollary~\ref{cor_imp} is false, namely
  that some deterministic terminating algorithm $\A$ exists that solves \probname\ using
  the stated set of base object types, and there is a constant $c \in \Z^+$ such that for any $H \in \H_\A$
  the total number of RMRs incurred in the DSM model in $H$ is at most $c$ times the number of processes participating.
Replacing the variables accessed via CAS or LL/SC in $\A$ with the locally-accessible $O(1)$-RMR implementations
  of these primitives presented in \cite{golab:phd,ghhw:cas},
  we obtain another terminating algorithm $\A'$ that solves the \probname\ using
  only atomic reads and writes, and there exists a constant $c' \in \Z^+$ such that for any $H \in \H_{\A'}$
  the total number of RMRs incurred in the DSM model in $H$ is at most $c'$ times the number of processes participating.
The existence of $\A'$ contradicts Theorem~\ref{thm_imp}.
\end{proof}

%%%%% Done processing imp.tex
%%%%% Processing file ub.tex
\section{Additional complexity bounds} \label{sec:ub}
In this section we discuss solutions to variations of the \probname\
   defined in Section~\ref{sec:pspec}.
Recall that there are two flavors of the problem:
   with polling semantics, waiters repeatedly call \Poll{} to determine
   if the signal has been issued;
   with blocking semantics, waiters instead call \Wait{}, which does not return
   until the signal has been issued.
Both flavors of the \probname\
   are solved easily in the CC model using $O(1)$ RMRs per process worst-case.
The solution is presented in Section~\ref{sec:ubcc}.

As one might suspect, in light of our lower bound result, the
   solution space for variations of the \probname\
   in the DSM model is somewhat more complex.
The solutions proposed for the CC model do not work ``out of the box''
   in the DSM model in the sense that they have unbounded RMR complexity.
Nevertheless, RMR-efficient algorithms for the DSM model
   can be devised using more elaborate synchronization
   techniques.
We explore such solutions in detail in the remainder of this section.
For each problem variation we consider, we will describe the polling
   solution, from which blocking solution can be achieved easily
   by implementing \Wait{} via repeated execution of the code for \Poll{}.
In some cases, a more efficient solution is possible for
   blocking semantics, as we indicate below.

\paragraph{Single waiter}\  \newline
If there is at most a single waiter, and its ID is not necessarily fixed in advance,
   the problem can be solved using two global variables,
   say $W$ (process ID, initially NIL) and $S$ (Boolean, initially \False),
   as well as an array $V[1..N]$ of Boolean variables (initially \False),
   where $V[i]$ is local to process $p_i$.
The first call to \Poll{} writes the waiter's ID to $W$, and then
   reads and returns the value of $S$.
On subsequent invocations by process $p_i$, \Poll{} reads and returns
   $V[i]$ instead.
\Signal{} sets $S$ to \True, and then reads $W$.
If NIL is read, \Signal{} returns immediately.
If a non-NIL value is read, it must be the waiter's ID, say $p_j$,
   in which case the signaler writes \True to $V[j]$.
This algorithm has $O(1)$ RMR complexity per process in the worst case,
  matching the upper bound for the CC model.

\paragraph{Many waiters, fixed in advance}\  \newline
In this formulation of the problem, the signaler knows in advance the IDs
   of the waiters that will participate eventually
   (i.e., if the history is extended by sufficiently many steps).
A simple solution in this case is to use an array of Boolean
   variables $V[1..N]$, initially all \False, $V[i]$ local to process $p_i$.
A call to \Poll{} by $p_i$ reads and returns $V[i]$,
   and \Signal{} sets $V[j]$ for each waiter $p_j$ whose ID is fixed in advance.
Letting $W$ denote the number of waiters, this solution has $O(W)$ RMR
   complexity per process worst-case.
However, amortized RMR complexity may be more than $O(1)$ RMRs
   if the signaler performs $W$ RMRs but only $o(W)$ waiters
   participate so far in the history.
(This is possible if the history is not yet long enough for every
   fixed waiter to begin participating.)

For terminating solutions, it is easy to reduce the amortized RMR complexity
   to $O(1)$ in all histories.
The signaler can simply wait for each waiter to participate, using another
   array of Boolean flags, before writing any element of $V$.
With blocking semantics, the worst-case RMR complexity per process can
   also be reduced to $O(1)$ using the work sharing techniques described
   in \cite{ghw:o1l}.

For wait-free solutions using reads, writes and comparison primitives,
   $O(1)$ amortized RMR complexity is impossible to achieve
   for all histories when $W$ is large enough (e.g., $W \in \Theta(N)$).
This result follows by a lower bound proof similar to the one
   given in Section~\ref{sec:lb}.
(Although the signaler knows which waiters will participate eventually,
   it does not know which waiters participate at the time when it
   calls \Signal{}, which must return in a bounded number of steps.)

For terminating solutions with polling semantics, it is also possible
  to show that in the worst case the signaler must perform $\Omega(W)$ RMRs
  if all $W$ waiters participate by the time \Signal{} is called.
This follows by a simplified version of the lower bound proof from Section~\ref{sec:lb}.
(This simplified proof is similar in spirit to the lower bound proof of
  Hadzilacos and Danek \cite{danek:gme}, but does not rely on any form of
  wait-freedom.)
The main idea is as follows:
First, $W$ waiters call \Poll{} repeatedly until they are all stable
   (i.e., accessing only local memory).
We then allow each waiter to complete any ongoing call to \Poll{} it may have.
Next, a signaler makes a call to \Signal{}, which
  must terminate in a finite number of steps because the signaler is running
  solo starting from a state where each waiter is not required to take
  any more steps (i.e., it is ``in between'' calls to \Poll{}).
Furthermore, before this call to \Signal{} terminates, the signaler must write
  remotely to the local memory of each waiter, except possibly itself,
  which incurs $\Omega(W)$ RMRs.
To see this, suppose that the signaler neglects to write $p_i$'s memory for
  some waiter $p_i$ different from itself.
In that case, after the call to \Signal{} completes, $p_i$ may make another call to
  \Poll{} that will incorrectly return the same value as $p_i$'s previous call
  (i.e., \False).

\break
\paragraph{Many waiters not fixed in advance, one signaler fixed in advance}\  \newline
The solution is similar to the case with fixed waiters, with a few additional steps.
Upon calling \Poll{} for the first time, a waiter ``registers'' with the
   signaler by setting a dedicated Boolean flag in the signaler's local memory.
The signaler, upon calling \Signal{}, checks for each $i$ whether $p_i$
   has registered, and if so, performs a remote write to $V[i]$.
In addition, we must handle correctly the race condition when waiters
   register while the signaler is calling \Signal{}.
To that end, it suffices to use an additional global variable
   analogous to $S$ from the single waiter case.
The signaler writes $S$ at the beginning of \Signal{}, and waiters check
   $S$ at the end of their first call to \Poll{} (i.e., after registering).

A terminating algorithm for this version of the problem appears in \cite{ghhw:cas}.
It uses using atomic reads and writes only and
  incurs $O(1)$ RMRs per process in the worst case. %, and $O(N)$ space.

\paragraph{Many waiters not fixed in advance, one signaler not fixed in advance}\  \newline
With polling semantics the problem is subject to the lower bound from Section~\ref{sec:lb}.
That is, if only reads, writes and CAS or LL/SC are used, the problem can be solved
   more efficiently in the CC model than in the DSM model
   with respect to amortized RMR complexity.
It is possible to close this gap by using stronger primitives.
Recall from Section~\ref{sec:relwork} that if Fetch-And-Increment or Fetch-And-Add
  are available in addition to reads and writes, then it is possible to solve mutual
  exclusion using $O(1)$ RMRs per process, which can be used to construct
  a shared queue with the same RMR complexity.
Waiters and signalers can leverage such a queue as follows: During its first call to
  \Poll{}, a waiter adds itself to the queue, and also checks a global flag to see
  if a signaler has started a call to \Signal{}.
During subsequent calls to \Poll{}, a waiter only checks a dedicated flag in its own
  local memory.
During a call to \Signal{}, the signaler sets the global flag, then
  retrieves the set of all waiters from the shared queue, and writes the dedicated
  flag for each waiter found in the queue.
The worst-case RMR complexity per process of this solution is $O(1)$ for waiters, and $O(k)$
  for the signaler when there are $k$ waiters participating.
Thus, amortized RMR complexity is $O(1)$.

With blocking semantics, the problem can be reduced to the single-waiter case by having
   the waiters elect a leader, which learns about the signal and then
   ensures that the signal is propagated to the remaining waiters.
Using the synchronization techniques described in \cite{ghhw:cas},
   such a solution is possible with $O(1)$ RMR complexity per process worst-case,
   using only atomic reads and writes. %and $O(N)$ space.
(The leader election algorithm must tell each waiter the ID of the leader
   rather than merely telling each waiter whether it is the leader.)

\paragraph{Many waiters not fixed in advance, many signalers}\  \newline
One possibility is to reduce this case to ``one signaler not fixed in advance''
  by having signalers elect a leader that will signal the waiters.
Leader election can be solved in $O(1)$ RMRs per process worst-case using atomic reads and writes
  by a terminating algorithm \cite{ghw:o1l}, or in one step per process using virtually
  any read-modify-write primitive (e.g., Test-And-Set or Fetch-And-Store).

%%%%% Done processing ub.tex
%%%%% Processing file discussion.tex
\section{RMRs vs. Observed Performance} \label{sec:disc}

In this section, we discuss the relationship between RMR complexity
and observed performance in real world multiprocessors,
and comment on the practical interpretation of our complexity separation result.

The RMR complexity measure attempts to quantify inter-process communication
by counting memory accesses that engage the interconnect joining processors
and memory (see Figure~\ref{fig_arch}).
Since the interconnect is slow relative to processor speed,
observed performance tends to degrade as communication over the interconnect increases.
This effect has been demonstrated in the context of mutual exclusion algorithms
  on the Sequent Symmetry (a CC machine) and the BBN Butterfly (a DSM machine)
  \cite{tand:spin, graunke:synch, mcs:algo, yang:fast}.
The key lesson from this body of literature is that
  algorithms with bounded RMR complexity (i.e., so-called \emph{local-spin} algorithms)
  outperform algorithms that have unbounded RMR complexity by a significant margin
  under worst-case conditions (i.e., maximum concurrency).

Although the benefits of local-spin algorithms are well understood,
  it is important to note that RMR complexity is not a very precise
  tool for predicting real world performance, especially for
  cache-coherent machines.
For example, to derive our upper bound in Section~\ref{sec:ubcc},
  we made the simplifying assumption that a cache behaves ``ideally,''
  meaning that it never drops data spuriously (see Section~\ref{sec:model}).
Since this assumption does not hold in a preemptive multitasking environment,
  especially under high load,
  theoretical RMR complexity bounds are prone to underestimate the actual number of RMRs.
An even more important concern is the imprecise relationship between
  RMRs and communication.
We focus on the latter issue in more detail in the remainder of this section.

Consider the simplified example of a cache-coherent system where processes
  communicate using read and write operations, and coherence is ensured
  by a \emph{write-through} protocol \cite{pathen:org}.
In such a system, a read operation on a shared object $O$ either finds
  a cached copy of $O$ locally, or else it fetches $O$ from main memory
  and creates a copy of $O$ in the local cache.
A write operation on $O$ applies the new value for $O$ to main memory,
  creates (or updates) a copy of $O$ in the local cache, and
  invalidates (i.e., destroys) all copies of $O$ in remote caches.
Thus, while an RMR on read generates a fixed amount of communication,
  an RMR on write may trigger multiple ``invalidation messages.''
This is in contrast to the DSM model where, in the absence
  of a coherence protocol, any RMR generates a fixed amount
  of communication.

The above example illustrates that RMRs in the CC model and
  RMRs in the DSM model are very different ``currencies''
  for describing the cost of an algorithm.
Consequently, a meaningful comparison between them
  requires that we define the ``exchange rate.''
To that end, we must fix a particular cache-coherent architecture
  and coherence protocol.
The simplest scenario occurs when the interconnect joining processors and
  memory is a shared bus, and any message generated by the coherence protocol
  is broadcast to all processors.
In that case, a single message suffices to invalidate all remote copies of an
  object on write, and so RMRs in this type of system are
  ``at par'' with RMRs in the DSM model.

Since a shared bus offers limited communication bandwidth,
  realistic large-scale cache-coherent systems use more elaborate interconnects.
In such systems, an RMR on write may generate multiple invalidation messages,
  the exact number depending on the topology of the interconnect and
  the state information maintained by the coherence protocol.
Consequently, RMR complexity per process can underestimate vastly
  the amount of communication triggered by a process.
But what about amortized RMR complexity, which is the focus of this paper?

A key observation is that in any cache-coherent system, a cached copy
  of a shared object can be invalidated at most once, because
  invalidation destroys it.
Since an RMR is necessary to create a cached copy in the first place,
  this means that the total number of \emph{invalidations}
  (i.e., events where a cached copy of an object is destroyed)
  is bounded from above by the number of RMRs.
The implications of this on message complexity depend on
  how the number of invalidations relates to the
  number of invalidation messages that trigger them.

Ideally, the cache coherence protocol would maintain sufficient
  information so that an invalidation message for a particular
  shared object $O$ is only sent to remote caches that
  actually hold a copy of $O$.
This is an unrealistic assumption because in an $N$-processor system,
  it requires roughly $N$ bits of state for each cached object!
However, it does ensure that at most one invalidation message
  is generated for each invalidation that actually occurs.
In that case, amortized RMR complexity corresponds well
  to amortized message complexity.

In realistic large-scale cache-coherent systems,
  the coherence protocol maintains far less state than in an ideal system,
  and so RMRs may trigger superfluous invalidation messages
  (i.e., ones that do not actually cause an invalidation).
In such systems, amortized RMR complexity may be lower asymptotically
  than amortized message complexity, and so
  our RMR complexity separation does not imply
  that in practice a large-scale cache-coherent machine could solve the \probname\
  more efficiently than a DSM machine with the same number of processors.

%%%%% Done processing discussion.tex
%%%%% Processing file conc.tex
\section{Conclusion} \label{sec:conc}

In this paper, we showed that the \probname\ can be solved in a
   wait-free manner using only atomic reads and writes, with
   $O(1)$ RMRs per process in the CC model, and $O(1)$ space.
We then showed that the same problem cannot be solved in the DSM model
   with the same parameters, even if we weaken these parameters
   in all of the following ways simultaneously:
(1) we allow only one signaler instead of many,
(2) we settle for $O(1)$ RMRs complexity in the amortized sense rather than in the worst case,
(3) we allow unbounded space,
(4) we weaken the progress condition from wait-free to terminating, and
(5) we allow use of Compare-And-Swap or Load-Linked/Store-Conditional
    in addition to reads and writes.
Thus, we separate the CC and DSM models in terms
    of their power for solving the \probname\ efficiently with respect to
    amortized RMR complexity.

%%%%% Done processing conc.tex

\paragraph{Acknowledgments}
Sincere thanks to Prof.\ Philipp Woelfel at the University of Calgary
for stimulating discussions on the subject of RMRs and the relative
power of the CC and DSM models for solving synchronization problems
efficiently with respect to amortized RMR complexity.
We are grateful to the anonymous referees for their insightful comments and suggestions.
Special thanks to Dr.\ Mark Tuttle at Intel for his perspective on the performance
  of CC systems in the real world and the shortcomings of the RMR
  complexity measure.
Thanks also to Dr.\ Ram Swaminathan at HP Labs and Dr.\ Robert Danek at Oanda
  for their careful proofreading of this work.

\bibliographystyle{abbrv}
%%%%% Processing file main.bbl

%%%%% Done processing main.bbl

\end{document}